\newcommand{\be}{\begin{equation}}
\newcommand{\ee}{\end{equation}}
  \pgfplotsset{compat=newest} 
  \pgfplotsset{plot coordinates/math parser=false}
\renewcommand\appendix{\par
  \setcounter{section}{0}
  \setcounter{subsection}{0}
  \setcounter{figure}{0}
  \setcounter{table}{0}
  \renewcommand\thesection{Appendix \Alph{section}}
  \renewcommand\thefigure{\Alph{section}\arabic{figure}}
  \renewcommand\thetable{\Alph{section}\arabic{table}}
}
  \newlength\fheight 
    \newlength\fwidth 
\newtheorem{theorem}{Theorem}
\begin{document}

%
\title{
Practical Issues in the Synthesis of\\ 
Ternary Sequences
}

%

\author{A.~De Angelis, 
    J.~Schoukens, 
    K.~R.~Godfrey,
    P.~Carbone
\thanks{
This work was supported in part by the Fund for Scientific Research (FWO-Vlaanderen), by the Flemish Government (Methusalem), the Belgian Government through the Inter university Poles of Attraction (IAP VII) Program, and by the ERC advanced grant SNLSID, under contract 320378.
}
\thanks{A.~De Angelis and P.~Carbone are with the Engineering Department,  University of Perugia, via G. Duranti 93, 06125 Perugia, Italy.
\{alessio.deangelis, paolo.carbone\}@unipg.it}
\thanks{J. Schoukens is with Department ELEC, Vrije Universiteit Brussel, Pleinlaan 2, B1050 Brussels, Belgium.}
\thanks{K.~R.~Godfrey is with the School of Engineering, University of Warwick, CV4 7AL, Coventry, United Kingdom.}%
\thanks{
Reference to the final published version: A. De Angelis, J. Schoukens, K. R. Godfrey and P. Carbone, ``'Practical Issues in the Synthesis of Ternary Sequences,'' in IEEE Transactions on Instrumentation and Measurement, vol. 66, no. 2, pp. 212-222, Feb. 2017.
Link to the final published version: 10.1109/TIM.2016.2622778. $\copyright$ 2017 IEEE. Personal use of this material is permitted. Permission from IEEE must be obtained for all other uses, in any current or future media, including reprinting/republishing this material for advertising or promotional purposes, creating new collective works, for resale or redistribution to servers or lists, or reuse of any copyrighted component of this work in other works}
}

%
%

\markboth{Preprint version}{}
%

\maketitle

\begin{abstract} 
Several issues related to the practical synthesis of ternary sequences with specified spectra are addressed in this paper. Specifically, sequences with harmonic multiples of two and three suppressed are studied, given their relevance when testing and characterizing nonlinear systems.
In particular, the effect of non-uniform Digital to Analog Converter (DAC) levels on the spectral properties of the generated signal is analyzed.
It is analytically shown that the DAC non-uniform levels result in degraded harmonic suppression performance. 
Moreover, a new approach is proposed for designing ternary sequences, which is flexible and can be adapted to suit different requirements.
The resulting sequences, denoted as randomized constrained sequences, 
are characterized theoretically by deriving an analytical expression of the power spectral density. Furthermore, they are extensively compared with three synthesis approaches proposed in the literature.
The approach is validated by numerical simulations and experimental results, showing the potential to achieve harmonic suppression performance of approximately 100 dB.
\end{abstract}

\begin{IEEEkeywords}
Frequency response function measurement, ternary sequences, digital-to-analog converters, spectral analysis.
\end{IEEEkeywords}


%
\IEEEpeerreviewmaketitle


\section{Introduction}
When testing and characterizing nonlinear systems, such as large-signal communication amplifiers \cite{MirriEtAl2004}, an excitation signal is provided at the input of the system under test and the response at the output is measured. 
Typically, the testing process requires a careful design of the excitation signal. Moreover, the analysis of systems in the frequency domain may be more advantageous than a time-domain approach. 
When using the frequency domain based approach, periodic signals with a specified power spectrum provide considerable advantages over other types of excitation signals, such as transient or noise-like signals.
By properly choosing the spectral components of the periodic signal, in fact, it is possible to isolate, detect, and analyze quantitatively the nonlinear distortions, while from the same measurements also a nonparametric noise model is retrieved. 
This mitigates the impact of nonlinear distortions on frequency response function (FRF) measurement \cite{Pintelon&Schoukens2012}. 

In fact, nonlinear distortions create additional harmonics at the output that were not present at the input. 
Specifically, for periodic signals, a nonlinearity of degree $n$ generates additional harmonics at frequencies given by all the possible combinations of the harmonics of the input signal, taken $n$ at a time.
In the particular case of $n$ even, any combination of odd harmonics always gives even harmonics
Therefore, if a signal consisting only of odd harmonics is applied at the input of a nonlinear system, even-order nonlinearities do not influence the FRF measurement, because they do not overlap with the harmonics originally present in the input signal  \cite{Pintelon&Schoukens2012}.
However, with such a signal, it is impossible to completely eliminate the impact of odd-order nonlinearities, but the impact of third-order nonlinearities on FRF measurements is greatly reduced if harmonic multiples of three are excluded from the signal \cite{Barker&Godfrey1999}.
For these reasons, it is advisable to design excitation signals with harmonic multiples of both two and three suppressed. 
The simplest multi-level signals that allow suppression of such harmonics are ternary sequences and therefore they are the object of considerable interest by the research community \cite{BarkerEtAl2005}.

In this paper, the problem of designing excitation sequences is approached from a measurement point of view. Specifically, it is addressed by analyzing the relevant practical aspects and by proposing and characterizing a design strategy. 
Therefore, the aim of this paper is twofold. On the one hand, we provide an analysis of the implementation issues related to the generation and acquisition of ternary sequences. 
On the other hand, we propose a new method to design ternary sequences with harmonic multiples of two and three suppressed that will be more robust with respect to the effect of non-ideal levels in the Digital to Analog Converter (DAC) generator. 
The method is based on numerical optimization, and is tunable to meet a wide range of requirements for the excitation signal spectrum. 
We provide a theoretical characterization of the proposed method and describe an extensive experimental evaluation of its performance compared with other ternary sequence synthesis methods from the literature, extending the preliminary results in \cite{DeAngelisEtAl_I2MTC2016}.

\section{Related Work}
\label{sec:related_work}
The design of excitation signals with predefined properties has received considerable attention in the instrumentation and measurement literature. 
Several applications have been considered, including channel measurement for communication systems, electrical impedance tomography, nonlinear systems characterization, FRF measurement, analog-to-digital converters (ADC) testing, and microwave measurement techniques. 

In particular, in the recent paper \cite{ZhouEtAl2016}, the channel measurement method for long-term evolution communication networks is proposed and validated in a high-speed railway scenario. The method uses a periodic multifrequency signal as the cell-specific reference signal. This facilitates the measurement of the channel impulse response.

The design of excitation signals using orthogonal codes is proposed for multiple-input-multiple-output systems in \cite{GeversEtAl2015}. The proposed approach is implemented in an electrical impedance tomography application, allowing for faster operation with respect to the conventional time-division approach. 
Furthermore, the performance of several excitation signals for portable biomedical applications is compared in \cite{BouchaalaEtAl2015}.

A method for designing multisine signals with logarithmic distribution of spectral lines is proposed in \cite{GeerardynEtAl2013}. These signals provide robust estimation of the FRF of resonating systems over a wide frequency range. 
Moreover, algorithms for designing multisines with low crest factor are proposed and analyzed in \cite{Friese1997} and \cite{YangEtAl2015}.

The problem of measuring the nonlinear behavior of ADCs by proper synthesis of excitation signals using a multisine design approach has been studied in \cite{OngEtAl2012}. This approach, based on optimizing a phase-plane objective function, is shown to represent the conversion error distribution more accurately with respect to other multisine design techniques. 
In the same application area, a signal generation method, based on the linearity property of the sine function, is proposed in \cite{Vora&Satish2011} for measuring ADC static nonlinearity.

In the microwave measurement field, a multisine excitation design approach is proposed for large-signal S-parameter calibration over a wide frequency range in \cite{VanMoer&Rolain2008}. This approach reduces the time necessary for characterizing the nonlinear distortions of the system under test with respect to the conventional swept-sine approach. 

When designing excitation signals, discrete multilevel sequences are widely studied, because they are easy to generate in a practical setting, and their properties can be tuned for the considered application in a simple fashion \cite{Tan&Godfrey2002}. As an example, the design of discrete multilevel sequences for characterizing nonlinear systems is studied in \cite{WongEtAl2013}. It is shown that, by adjusting properly signal levels and associated probabilities, it is possible to mimic the properties of a Gaussian input sequence. This allows for reducing the bias in the measurement of the best linear approximation of nonlinear systems.

Among discrete multilevel sequences for testing nonlinear systems, ternary sequences are particularly important. 
As discussed in Section I, in fact, when measuring the FRF of nonlinear systems, it is advisable to suppress harmonic multiples of two and three.
Such suppression can only be achieved using signals having more than two levels \cite{TanEtAl2005}. Using binary signals, in fact, it is only possible to suppress even-order harmonics \cite{Barker&Godfrey1999}. For this reason, pseudorandom ternary sequences are interesting, since they represent the simplest form of multi-level signal and are easily applied to transducers and actuators \cite{BarkerEtAl2005}. Moreover, such signals, when compared to signals with a larger number of levels, normally provide better dispersion performance, as quantified by the performance index for perturbation signals introduced in \cite{GodfreyEtAl1999}, while being easier to design.

The state-of-the-art in the synthesis of ternary sequences is largely based on theoretical results.
The present paper extends these results 
because it considers 
the usage of ternary sequences in conjunction with a treatment of practical issues related to the synthesis of excitation signals and it proposes a new generation mechanism that is more robust with respect to DAC nonideal behaviors.
In particular, we analyze the spectral properties of the adopted ternary sequences in the presence of nonuniform DACs. This aspect has not been studied in detail by the above mentioned research works, but it is a relevant issue in the domain of instrumentation and measurement.

\section{Synthesis of Ternary Sequences}
\label{s:intro_ternary}
In this section, we focus our attention on the spectral properties of practical ternary pseudorandom sequences.
We start by presenting an analysis of the ideal case, already provided in the literature in \cite{Barker&Godfrey1999} and \cite{Tan}, among others.
We then proceed to study the case where the ternary sequences are generated by a practical DAC with nonuniform levels.

\subsection{Ideal Case}

Following the derivations in \cite{Barker&Godfrey1999} and \cite{Tan},
define $u[n]$ as a ternary sequence taking values in $\{-1,0,1\}$ when $n=1, \ldots, N$, with $N$ being a multiple of $6$.
The condition $U[k]=\sum_{i=1}^Nu[i]
\exp\left( -\frac{j2\pi n k}{N}\right)$, 
being zero at even frequencies and at multiplies of $3m$, $m$ integer, requires:
\begin{equation}
	u[n]\!+\!u\left[n\!+\!N/2\right]\!=\!0 ; \, u[n]\!+\!u\left[n\!+\!N/3\right] \!+\! u\left[n\!+\!2N/3\right]\!=\!0
\label{prop}
\end{equation}

In the direct synthesis technique described in \cite{Tan}, 
$u[\cdot]$ is obtained by multiplying a basic  binary sequence  
$u_{basic}[\cdot]$ taking values in $(-1,1)$ 
by the special sequence $[1\; 1\; 0\; -1\; -1\; 0]$. By periodizing the product sequence, 
the $0$'s in $u[n]$ are exactly located at $n=3m$. 

If the signal is generated by an ideal DAC with uniform levels, the condition \eqref{prop} ensures that harmonic multiples of two and three are entirely suppressed.
However, if the signal is generated by a real DAC with non-uniform levels, the suppression is not perfect and undesired harmonic components are present, as we show in the following subsection.

\subsection{Effect of DAC non-uniform levels on the spectrum of ternary sequences} 
\label{sec:nonuniform}

In order to quantify the effect of non-ideal levels in the DAC, the sequence $u[\cdot]$ is assumed as being generated by a nonuniform DAC that maps the input values $(-1,0,1)$
to the output voltages $a_{-1}, a_0, a_1$, with the only constraint $a_{-1}<a_0<a_1$. 
Thus the DAC output sequence $y_{DAC}[\cdot]$ is equal to
\[
	y_{DAC}[n] = \sum_{q=-1}^1 \left[u[n]=q\right]a_q
\]
where the notation 
$[f(x)=q]$ represents the indicator function of the event $\left\{ x \, \mathrm{such} \, \mathrm{that} \, f(x)=q \right\}$.

By defining $\beta = \frac{a_{-1}+a_1}{2}$, 
and $\alpha=a_1-\beta$, then
\[
	y_{DAC}[n] = \alpha u[n] + \beta + (a_0-\beta) \left[u[n]=0\right]
\]
which shows that apart from a constant gain $\alpha$ and a constant offset $\beta$,
$y_{DAC}[\cdot]$ differs from $u[\cdot]$ by a nonlinear error sequence $e_{DAC}[n]=(a_0-\beta) \left[u[n]=0\right]$.
Observe that when the DAC is uniform $a_{-1}=-a_1=-1$ and $a_0=0$, so that $e_{DAC}[\cdot]$ is identically zero.
The problem of spectral distortion due to 
nonuniform DACs is conventionally solved by resorting to {\em 
dynamic element matching} techniques or by using {\em mismatch shaping} DACs,
for example in the 
case of $\Sigma\Delta$ ADCs  \cite{Galton&Carbone1995}.
Such techniques can be applied in the design phase of a very large scale integration (VLSI) DAC circuit. However, when employing an existing off--the--shelf waveform generator, they are clearly not applicable.

Next it will be shown how the nonlinear error contribution due to 
$e_{DAC}[\cdot]$
destroys the properties (\ref{prop}) and results
in frequency components not present in $U[\cdot]$.
In fact, the error sequence $e_{DAC}[\cdot]$ produces a spectral contribution that can be evaluated as:
\begin{equation}
	E_{DAC}[k] =\sum_{n=1}^N e_{DAC}[n]\exp\left( -\frac{j2\pi n k}{N}\right)
	\label{DACspectrum}
\end{equation}
Since $e_{DAC}[n]$ is identically zero when $n \neq 3m$, $m$ integer, (\ref{DACspectrum}) becomes
\[
	E_{DAC}[k] = (a_0-\beta) \sum_{n=1}^{N/3} \exp\left( -\frac{j2\pi 3 n k}{N}\right)
\]
Given the orthogonality of the complex exponential basis functions
the only nonnegative 
values of $k$ that result in $E_{DAC}[k]\neq 0$ are $k=0, \nicefrac{N}{3}$ for which
$E_{DAC}[k]= (a_0-\beta) \frac{N}{3}$.  
Thus the effect of nonuniform DAC output levels results in a nonlinear 
contribution that adds a mean value and a harmonic component at an {\em even} frequency $k=\frac{N}{3}$ 
that should ideally be zero.
 
A simulation was run with $a_{-1}=-1.3, 
a_0=0.15, a_1=1.0$ and $N=42$. Spectra associated to the 
synthesized signal are shown in Fig.~\ref{genspec}. 
In this figure blue circles represent 
the spectrum generated by the nonuniform DAC, red crosses show the nonlinear error spectrum, and black dots denote the spectrum synthesized by the ideal sequence $u[\cdot]$. The contributions at DC and at $k=N/3=14$ are clearly visible.

\begin{figure}
\centering
\includegraphics[width = 0.95 \columnwidth]{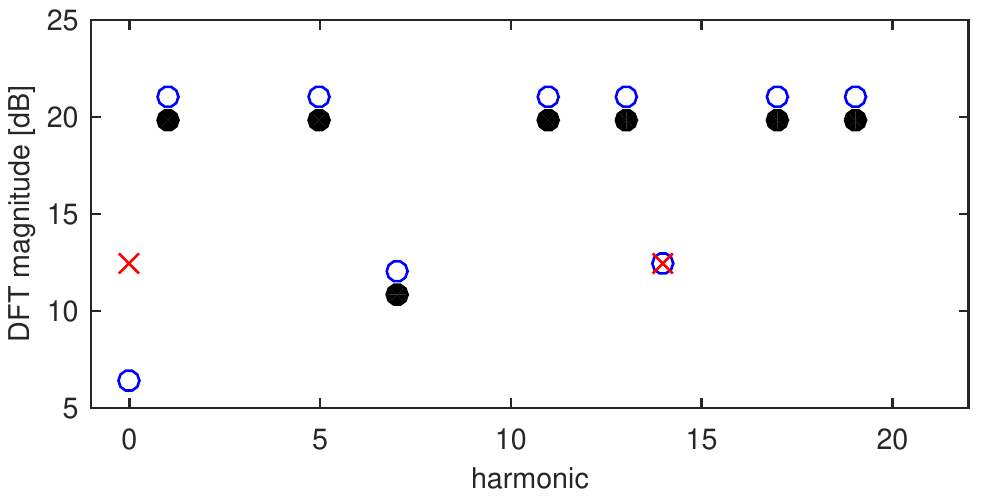}
\caption{Spectra associated to the synthesis of a ternary signal using a nonuniform DAC, with $a_{-1}=-1.3, 
a_0=0.15, a_1=1.0$ and $N=42$:  
uniform DAC -- ideal behavior (black dots); nonlinear error (red crosses); nonuniform DAC (blue circles). \label{genspec}}
\end{figure}

\section{Other implementation issues}
\label{s:implementation_issues}

In a practical implementation, DAC non-uniformity is not the only aspect that influences the properties of the synthesized sequences. 
In particular, another implementation issue is related to the duration of each sequence element (denoted as \emph{chip}). For theoretical results to be applicable, in fact, the uniformity of such duration should be ensured.
Non-uniformity may arise due to the principle of operation of Direct Digital Synthesis (DDS) instrumentation \cite{Agilent33220ADatasheet}. 
Specifically, when waveform memory depth is larger than 
sequence length $N$, the DDS device performs waveform ``stretching'' 
to fill the internal waveform memory, which is inaccurate when memory 
depth is not an integer multiple of $N$.
It can be mitigated by acquiring only one sample per chip, or by repeating the sequence multiple times for a more effective memory usage.  

Furthermore, synchronization between the signal generation device and the acquisition device should be ensured. If synchronization is not possible, the effect of leakage can be mitigated by the methods in \cite{SchoukensEtAl2003}.

Moreover, when generating a continuous-time signal from the digital sequence using the zero-order hold approach, the discrete spectrum is multiplied by a sinc function, as discussed in \cite{Barker&Godfrey1999}.
This effect should be taken into account when designing the acquisition system. 

Since the ternary signal contains sharp transitions, jitter of the sampling frequency should be taken into account. 
In fact, the higher the maximum slope of the signal, the more sensititive to jitter the acquisition system becomes.
To mitigate the effect of the sharp edges, a hardware low-pass filter should be used, providing also the benefit of reducing overshoot. 

Fundamental limitations are given by noise in the acquisition system. Wideband additive noise can be mitigated by coherent averaging over multiple periods.
Observe also that the finite resolution of the DAC and ADC used for generating and acquiring the sequence is a fundamental limitation for the attainable spurious free dynamic range (SFDR).
In the remainder of this paper, SFDR is defined as the ratio between the highest desired harmonic and the highest undesired harmonic. 
Furthermore, a meaningful indicator of signal quality is the total harmonic distortion (THD), given by the ratio between the power of the undesired harmonic components and that of the desired harmonic components \cite{IEEE1241}.

\section{Randomized constrained sequences with harmonic multiples of two and three suppressed}
\label{s:proposed_method}
\subsection{Construction of the sequences}
\label{s:construction}
The two constraints in (\ref{prop}) may suggest that it is possible to generate sequences of length that is a 
multiple of $6$ and that at the same time suppress harmonic multiples of two and three.
This can be done by starting from the vector $b_0 =[-1\; 0\; 1]$ and constructing the $3$ subsequences
$r_1$, $r_2$ and $r_3$ as follows:
\begin{equation} \label{eq:subsequences}
r_1 =
\left[
\begin{array}{c}
r_{11} \\
r_{12} \\
\vdots \\
\vdots \\
r_{1\frac{N}{6}}
\end{array}
\right]
\quad
r_2 =
\left[
\begin{array}{c}
r_{21} \\
r_{22} \\
\vdots \\
\vdots \\
r_{2\frac{N}{6}}
\end{array}
\right]
\quad
r_3 =
\begin{rcases}
\left[
\begin{array}{c}
r_{31} \\
r_{32} \\
\vdots \\
\vdots \\
r_{3\frac{N}{6}}
\end{array}
\right]
\end{rcases}
\text{$\frac{N}{6}$}
\end{equation}
such that for every $i=1, \ldots, \frac{N}{6}$ the $\left( 1 \times 3 \right)$ vector
$
	[r_{1i} \; r_{2i} \; r_{3i}]
$
is obtained by permuting, at random, the elements in the vector $b_0$.
Then the final sequence is obtained by juxtaposition of the vectors $r_1, r_2, r_3$ as follows:
\begin{align}
x=[
\begin{array}{cccccc}
	r_1^{T} & -r_2^{T} & r_3^{T} &  -r_1^{T} &  r_2^{T} & -r_3^{T} 
\end{array}
]
\label{eq:juxtaposition}
\end{align}
where the superscript $T$ denotes the transpose operation. 

The sequence $x$ has length $N$ and, by construction, its elements satisfy both constraints in (\ref{prop}).
Observe also that by swapping any two values in different subsequences, e.g. in $r_1$
 and $r_3$, the harmonic suppression properties still hold. Swapping can be applied to eventually improve the performance of
 the generated sequence against given criteria, e.g. RMS value or standard deviation of amplitudes 
 in the generated harmonics (see \cite{Tan}). Thus, once a performance 
 criterion is established, both swapping and new sequences can be generated to improve the obtained performance.
 
In the following, any sequence synthesized using the alternative generation approach proposed here will be referred to as \emph{Randomized Constrained Sequence} (RCS).
Conversely, any sequence generated using the direct generation approach in \cite{Tan} will be denoted as \emph{Direct Sequence} (DS).
The operation of the proposed generation method is summarized in Algorithm 1.

\begin{algorithm}[t] \label{alg1}
\caption{Generate a ternary sequence with harmonic multiples of two and three suppressed. The parameters $K_{max}$ and $J_{max}$ are selected by the user so to limit processing time.}\label{alg1}
\begin{algorithmic}[1]
\Procedure{Generate RCS $x$}{}
\State $ N \gets \text{length of sequence}$
\State $ b_0 \gets \left[ -1 \, 0 \, 1 \right]$
\State Define performance criterion $ V(x)$
\State $ V_{min} \gets V_0$
\State $ x \gets [\;] $
\For{$k=1$ to $K_{max}$}
	\State Randomly construct $r_1$, $r_2$, $r_3$, as in \eqref{eq:subsequences}
	\State 
	$
	x_k \gets \left[
	\begin{array}{cccccc}
		r_1^{T} & -r_2^{T} & r_3^{T} &  -r_1^{T} &  r_2^{T} & -r_3^{T} 
	\end{array}
	\right]
	$
	\If{$V(x_k) < V_{min}$}
		\State $V_{min} \gets V(x_k)$
		\State $ x \gets x_k $
	\EndIf
\EndFor

\For{$j=1$ to $J_{max}$}
	\State $x_j \gets$ Randomly swap two values in $r_1$, $r_2$, $r_3$
	\If{$V(x_j) < V_{min}$}
		\State $V_{min} \gets V(x_j)$
		\State $ x \gets x_j $
	\EndIf
\EndFor

\State \Return $x$

\EndProcedure
\end{algorithmic}
\end{algorithm}

 With respect to the DS, this approach:
 \begin{itemize}
 \item does not have a low-amplitude component in its harmonic content, unlike the DS (see figures 2 and 3);
 \item does not concentrate the effect of non--ideal DAC levels in two single components (at $0$ frequency and at $N/3$)
 but spreads the non--ideal error over a larger set of harmonics resulting in a better range free of unwanted harmonic contributions (see Fig. 3);
 \item allows for the generation of several different sequences possibly used to source multiple--input single--output system while few uncorrelated direct sequences are available (stated in \cite{Tan});
 \end{itemize}
However, for long sequences, RCSs have a somewhat larger spread in the amplitudes of the generated harmonics.
In fact, the DS has all desired harmonics of the same amplitude except for one, by construction. On the other hand, RCS presents random fluctuations in all desired harmonics, as illustrated by numerical simulations in the following section.
 
\subsection{Illustration in simulations}
 In Fig. \ref{comp}, the spectra of the DS is plotted (red crosses) together with that of the RCS (black circles), assuming ideal DAC levels and $N=762$.
 Both sequences have components where expected only. 
 The DS shows a smaller standard deviation
 in the amplitude of the generated harmonics. 
 However, there is one frequency in the DS for which the amplitude is rather small. 
 Thus the largest deviation from the maximum amplitude is larger in the DS than in the RCS.
 
When the DAC levels are assumed non ideal ($-1.0005$, $0.001$, $1.001$) the same spectra appear as in Fig. \ref{compnonid}.
 There are the additional expected components at $0$ and $N/3$ while the black spectrum shows an overall increase in 
 the wideband noise but a smaller component at $N/3$ and the same component at $0$ frequency (points overlap).
 In Fig. \ref{compnonid} the red harmonic at $n=127$ is wanted but has a rather small amplitude, while the component at $N/3$ is unwanted and has a rather
 large amplitude.
Hence a reduction by about 20 dB of the largest unwanted component is obtained with the proposed method.
\begin{figure}
\centering
\includegraphics[width=0.95\columnwidth]{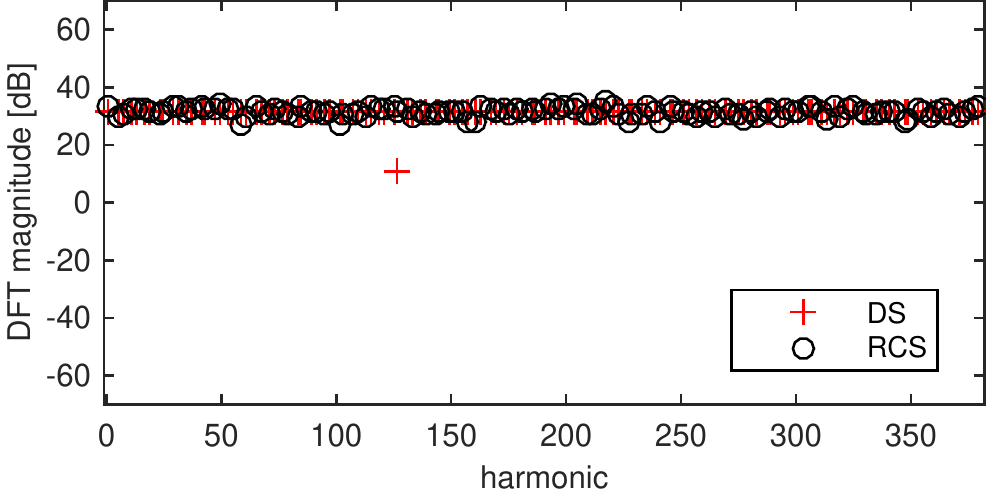}
\caption{Simulation results. Spectra associated to the synthesis of one period of a ternary sequence of length 762 using the DS method (red crosses) and the RCS approach described here (black circles), with an ideal DAC. \label{comp}}
\end{figure}  

\begin{figure}
\centering
\includegraphics[width=0.95\columnwidth]{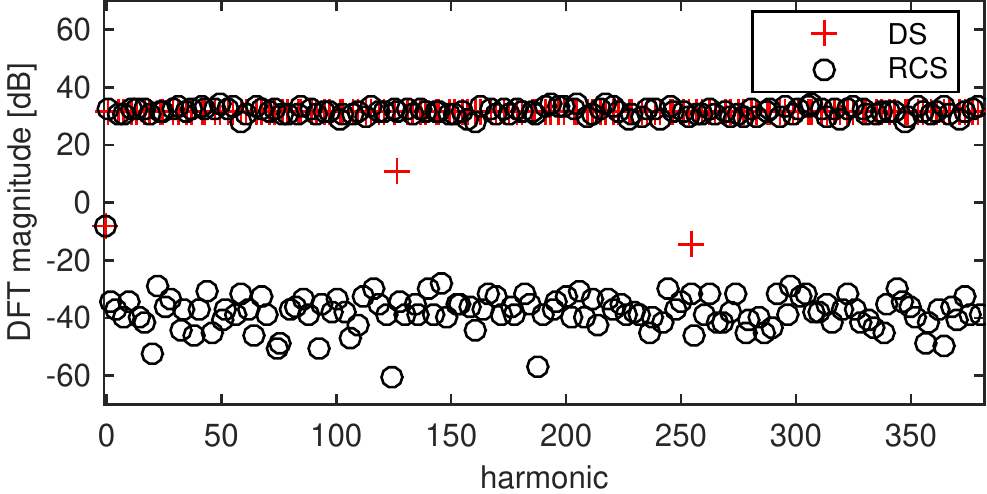}
\caption{Simulation results. Spectra associated to the synthesis of one period of a ternary sequence of length 762 using the DS method (red crosses) and the RCS approach described here (black circles), with a non--ideal DAC. Effect of non--ideal DAC levels: a large harmonic appears at $N/3$ in the red spectrum. \label{compnonid}}
\end{figure}

\subsection{Theoretical Characterization}
In this section, we provide two theorems that completely characterize the frequency domain behavior of the RCS, both in the ideal DAC case and in the non-uniform DAC case. 
\begin{theorem}
Consider the RCS $u_i$, $i=1 , \dots , N$, constructed according to the procedure described in Section~\ref{s:construction}. 
Its power spectral density assuming an ideal DAC is given by 
\begin{align}
R_U[k] = \frac{2}{3} \left( 1 - (-1)^k + \cos{\left(\frac{\pi}{3} k \right)}
	- \cos{\left(\frac{2}{3} \pi k\right)} \right) \,.
\end{align}
\end{theorem}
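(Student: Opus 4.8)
The plan is to compute the expected periodogram $R_U[k]=\tfrac1N\,E\{|U[k]|^2\}$ directly, exploiting the block structure of \eqref{eq:juxtaposition} together with the independence of the randomly permuted columns in \eqref{eq:subsequences}. (As a sanity check, Parseval forces $\sum_k R_U[k]=\tfrac1N\sum_n E\{u_n^2\}\cdot N=\tfrac{2N}{3}$, which is consistent with a value of $2$ on the $N/3$ passband frequencies.)

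First I would factor the DFT. Writing $M=N/6$ and summing over the six blocks, each block is a signed, shifted copy of one of $r_1,r_2,r_3$, so the block starting at offset $bM$ contributes a phase $\omega^{b}$ with $\omega=\exp(-j2\pi M k/N)=\exp(-j\pi k/3)$. Collecting the two occurrences of each subsequence gives
\[
U[k]=(1-\omega^{3})\bigl(R_1[k]-\omega R_2[k]+\omega^{2}R_3[k]\bigr),
\]
where $R_\ell[k]=\sum_{m=1}^{M}r_{\ell m}\exp(-j2\pi m k/N)$. Since $\omega^{3}=(-1)^{k}$, the prefactor $1-\omega^{3}=1-(-1)^{k}$ already annihilates every even $k$, recovering the first constraint in \eqref{prop}.

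Next I would write $U[k]=(1-\omega^{3})B[k]$ with $B[k]=\sum_{m=1}^{M}\exp(-j2\pi m k/N)\,c_m$ and $c_m=r_{1m}-\omega r_{2m}+\omega^{2}r_{3m}$. Because the columns $(r_{1m},r_{2m},r_{3m})$ are independent across $m$ and each is a uniform permutation of $b_0$, we have $E\{c_m\}=0$, so in $E\{|B[k]|^2\}$ all off-diagonal terms drop out and $E\{|B[k]|^2\}=M\,E\{|c_m|^2\}$. The single-column moment then follows from the permutation statistics $E\{r_{\ell m}^{2}\}=2/3$ and $E\{r_{\ell m}r_{\ell' m}\}=-1/3$ for $\ell\neq\ell'$ (the three entries of a column sum to zero while their squares sum to two): with the coefficient vector $(1,-\omega,\omega^{2})$ this gives $E\{|c_m|^2\}=2-\tfrac13(|1-\omega+\omega^{2}|^{2}-3)$, and using $|1-\omega+\omega^{2}|^{2}=3-4\cos(\pi k/3)+2\cos(2\pi k/3)$ one obtains $E\{|c_m|^2\}=2+\tfrac43\cos(\pi k/3)-\tfrac23\cos(2\pi k/3)$.

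Combining the pieces, $R_U[k]=\tfrac1N\,(1-(-1)^{k})^{2}\,M\,E\{|c_m|^2\}=\tfrac13\,(1-(-1)^{k})\bigl(2+\tfrac43\cos(\pi k/3)-\tfrac23\cos(2\pi k/3)\bigr)$, and the final step is to reduce this to the stated closed form using the integer-lattice identity $\cos(2\pi k/3)=(-1)^{k}\cos(\pi k/3)$, which collapses the product into the additive expression and simultaneously exhibits the suppression of the multiples of three. The main obstacle I anticipate is the bookkeeping in the two middle steps: assembling the six sign-and-phase contributions into the clean factored form, and getting the second-moment structure of a uniform permutation correct—especially the $-1/3$ cross-covariance—since any slip there corrupts the $\cos(\pi k/3)$ and $\cos(2\pi k/3)$ coefficients. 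The concluding trigonometric collapse is then only a routine identity check.
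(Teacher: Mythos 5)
Your proof is correct, but it follows a genuinely different route from the paper's. The paper works in the time domain: it enumerates the lags at which the cyclic autocorrelation $r_U[n]=E(u_mu_{m+n})$ is nonzero (namely $0$, $N/2$, $N/6$, $5N/6$, $N/3$, $2N/3$, with values $\pm\tfrac23$ and $\pm\tfrac13$ as in \eqref{eq:autocorrelation_ideal}) and then takes the DFT, finishing with Euler's formula. You instead compute the expected periodogram $\tfrac1N E\{|U[k]|^2\}$ directly by factoring the DFT over the six blocks of \eqref{eq:juxtaposition}, which yields the clean form $U[k]=(1-(-1)^k)\bigl(R_1[k]-\omega R_2[k]+\omega^2 R_3[k]\bigr)$, and then using the column-independence and the permutation second moments $E\{r_{\ell m}^2\}=\tfrac23$, $E\{r_{\ell m}r_{\ell'm}\}=-\tfrac13$. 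The two routes rest on the same probabilistic ingredient (those permutation moments) and are equivalent via the Wiener--Khinchin relation, which applies here because the process is cyclically stationary; I checked your algebra, including $|1-\omega+\omega^2|^2=3-4\cos(\pi k/3)+2\cos(2\pi k/3)$ and the final collapse via $\cos(2\pi k/3)=(-1)^k\cos(\pi k/3)$, and it reproduces the stated closed form exactly (and passes the Parseval and spot checks at $k\equiv 0,\ldots,5 \pmod 6$). What your approach buys is that the suppression of even harmonics is manifest from the outset through the prefactor $1-(-1)^k$, and the block factorization is structurally illuminating; what the paper's approach buys is the explicit autocorrelation \eqref{eq:autocorrelation_ideal}, which is then reused almost verbatim in the proof of Theorem~2, so if you intended to continue to that result you would end up recomputing essentially the same lag structure anyway.
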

\begin{proof}
See Appendix A.
\end{proof}

\begin{theorem}
In the presence of DAC non-uniform levels as defined in Section~\ref{sec:nonuniform}, the power spectral density of the RCS is given by
\begin{align}
\label{eq:theorem2}
R_Y[k] = N\left(\frac{1}{3}a_0+\frac{2}{3}\beta\right)^2 \delta[k]+ \alpha^2R_U[k] + R_V[k]
\end{align}
where $R_U[k]$ is the power spectral density of the RCS in the ideal case as defined in Theorem~1,
$ \delta[k]$ denotes the Kronecker delta function and
\begin{align}
\begin{split}
	R_V[k] & = \frac{2}{3}
	 (a_0-\beta)^2 \left(\delta[k \bmod 2]-\delta[k \bmod 6] \right)
\end{split} 
\end{align}
\end{theorem}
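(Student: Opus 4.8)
The plan is to compute the expected periodogram $R_Y[k]=\frac{1}{N}\,\mathbb{E}\,|Y[k]|^2$, where the expectation runs over the random column permutations that define the RCS, and to reduce it to the additive decomposition of the DAC output established in Section~\ref{sec:nonuniform}. Writing $y[n]=\alpha u[n]+\beta+e[n]$ with $e[n]=(a_0-\beta)z[n]$ and $z[n]=[u[n]=0]$, the DFT separates as $Y[k]=\alpha U[k]+\beta N\delta[k]+(a_0-\beta)Z[k]$, where $Z[k]$ is the DFT of the zero-indicator sequence $z[\cdot]$. For $k\neq 0$ the constant $\beta N\delta[k]$ drops out, so $\mathbb{E}|Y[k]|^2$ splits into an ideal term $\alpha^2\,\mathbb{E}|U[k]|^2$, a cross term $2\alpha(a_0-\beta)\,\mathrm{Re}\,\mathbb{E}[U[k]^{*}Z[k]]$, and an error term $(a_0-\beta)^2\,\mathbb{E}|Z[k]|^2$. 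By Theorem~1 and the definition of the PSD as the expected periodogram, the first term equals $\alpha^2 R_U[k]$; the whole problem therefore reduces to the two expectations involving $Z$.

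Next I would encode the randomness column by column. Let $M=N/6$ and, for each column $p=1,\dots,M$, let $s_p\in\{1,2,3\}$ be the position at which the $0$ of $b_0$ lands in the random permutation; the $s_p$ are i.i.d.\ uniform. Tracking how the six juxtaposed blocks in \eqref{eq:juxtaposition} replicate these zeros (blocks sharing the same subsequence share the zero pattern, and the $\pm$ signs do not affect it), the zero-indicator DFT factorizes as $Z[k]=(1+(-1)^k)\bigl(W_1[k]+\zeta W_2[k]+\zeta^2 W_3[k]\bigr)$, with $\zeta=e^{-j\pi k/3}$, $\omega=e^{-j2\pi/N}$, and $W_c[k]=\sum_{p=1}^{M}[s_p=c]\,\omega^{pk}$. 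The prefactor $1+(-1)^k$ already forces $Z[k]=0$ for odd $k$, which is exactly where $R_V$ must vanish.

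The first substantive step is to kill the $\alpha$-linear cross term by showing $\mathbb{E}[u[n]z[m]]=0$ for every $n,m$. When $n$ and $m$ lie in different columns this follows from independence and $\mathbb{E}[u[n]]=0$; when they lie in the same column $p$, note that $z[m]$ is determined by $s_p$, and conditioning on $s_p$ renders $u[n]$ either identically $0$ (if $u[n]$ is the zeroed entry) or an equiprobable $\pm1$ (otherwise), so its conditional mean is $0$. Hence $\mathbb{E}[U[k]^{*}Z[k]]=0$ and the output spectrum carries no contribution proportional to $\alpha(a_0-\beta)$, consistent with the stated form.

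The main obstacle is the second moment $\mathbb{E}|Z[k]|^2=(1+(-1)^k)^2\,\mathbb{E}\,|W_1+\zeta W_2+\zeta^2 W_3|^2$ for even $k$. Expanding it as $\sum_{c,c'}\zeta^{c-c'}\,\mathbb{E}[W_cW_{c'}^{*}]$ and using independence across columns gives $\mathbb{E}[W_cW_{c'}^{*}]=\tfrac19(|S_k|^2-M)+\tfrac{M}{3}\delta_{cc'}$ with $S_k=\sum_{p=1}^{M}\omega^{pk}$. The cancellations rest on two elementary root-of-unity facts: $1+\zeta+\zeta^2=0$ unless $6\mid k$, and $S_k=0$ for every nonzero multiple of $6$. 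Substituting these collapses $\mathbb{E}|Z[k]|^2$ to $2N/3$ when $k$ is even but not a multiple of $6$, and to $0$ otherwise, so that $(a_0-\beta)^2\,\mathbb{E}|Z[k]|^2/N=\tfrac23(a_0-\beta)^2\bigl(\delta[k\bmod 2]-\delta[k\bmod 6]\bigr)=R_V[k]$. Finally the bin $k=0$ is handled directly: there $U[0]=0$ and $Z[0]=N/3$ are deterministic, giving $Y[0]=N(\tfrac13 a_0+\tfrac23\beta)$ and the leading $\delta[k]$ term, while $R_U[0]=R_V[0]=0$ rules out double counting. Collecting the three surviving pieces yields \eqref{eq:theorem2}; the real effort is the block/column bookkeeping and the root-of-unity cancellations in $\mathbb{E}|Z[k]|^2$.
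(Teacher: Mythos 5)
Your argument is correct, and it reaches \eqref{eq:theorem2} by a genuinely different route from the paper. The paper works entirely in the lag domain: it expands $y_m y_l$ into six expectations, evaluates each by a case analysis on whether the lag $n=l-m$ lies in $\mathcal{M}=\{0,\frac{N}{6},\frac{N}{3},\frac{N}{2},\frac{2N}{3},\frac{5N}{6}\}$, assembles the explicit autocorrelation $r_Y[n]$ (in particular the indicator autocorrelation $r_e[n]=\frac19+r_E[n]$), and only then takes a DFT, using $\sum_{n=0}^{5}e^{-j2\pi kn/6}$ to produce the $\delta[k\bmod 2]-\delta[k\bmod 6]$ structure. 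You instead compute the expected periodogram directly: you decompose $Y[k]=\alpha U[k]+\beta N\delta[k]+(a_0-\beta)Z[k]$, factor the zero-indicator DFT as $Z[k]=(1+(-1)^k)(W_1+\zeta W_2+\zeta^2 W_3)$ through the six-block replication of the column zeros, and obtain the selection rules from $1+\zeta+\zeta^2=0$ (for even $k$ not divisible by $6$) and $S_k=0$ (for nonzero multiples of $6$). The two proofs share their probabilistic core --- the vanishing cross moment $E(u_m[u_l=0])=0$ and the pair statistics of the zero positions (your $\frac19(|S_k|^2-M)+\frac{M}{3}\delta_{cc'}$ is exactly the frequency-domain image of the paper's $r_e[n]$) --- but your version makes structurally visible \emph{why} the error energy is confined to even bins that are not multiples of $6$ and is flat there, and it isolates the DC term cleanly from the deterministic values $U[0]=0$, $Z[0]=N/3$. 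The paper's version is more mechanical but yields the time-domain autocorrelation \eqref{eq:r_Y} as a by-product. One convention you rely on implicitly, which the paper also uses, is that $R_U[k]=\frac1N E|U[k]|^2$ coincides with the DFT of $r_U[n]$; this holds because the construction makes $E(u_mu_{m+n})$ depend only on the circular lag $n$, a fact worth one sentence if you write this up.
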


\begin{proof}
See Appendix B.
\end{proof}

\begin{figure}
\centering
\subfigure
{\includegraphics[width = 0.95 \columnwidth]{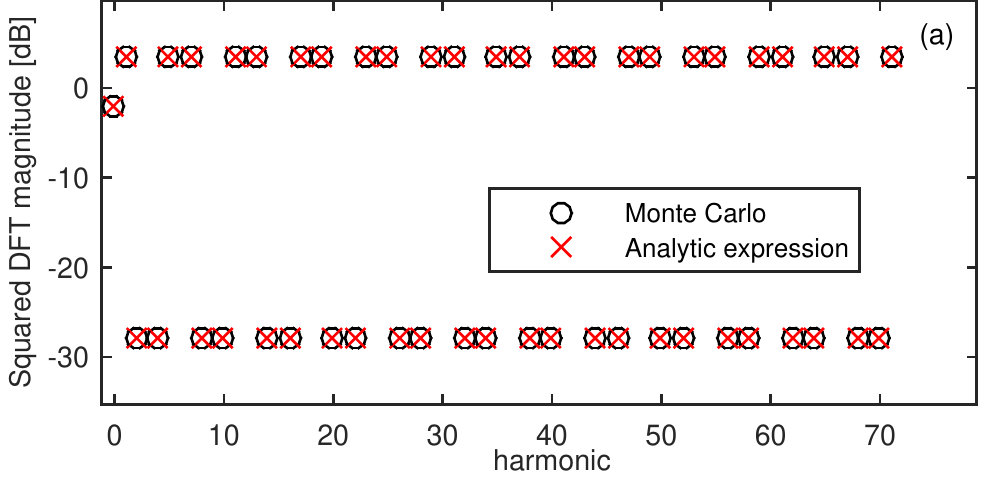}}
\vskip -5pt
\subfigure
{\includegraphics[width = 0.95 \columnwidth]{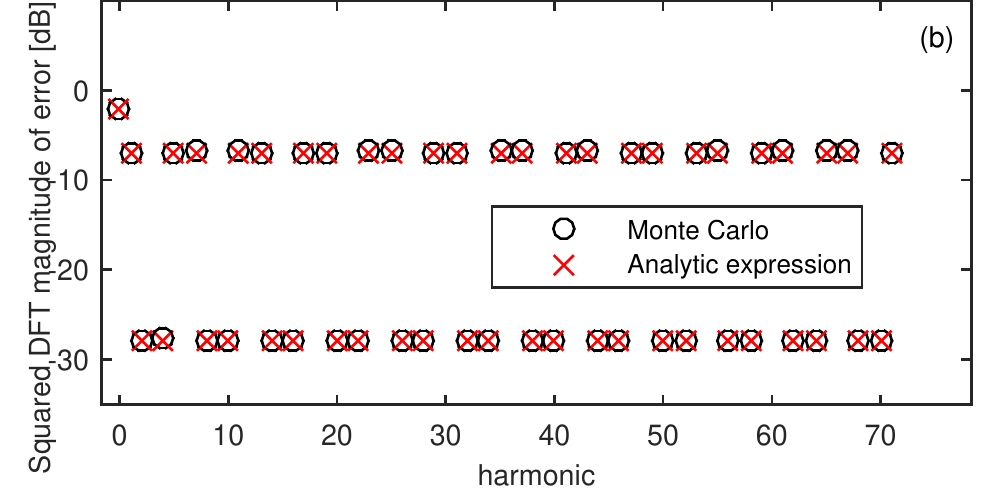}}
\caption{
Simulated spectrum of RCS of length 144 using a non-uniform DAC with levels $\left( -1 ,\, 0.1 ,\, 1.1 \right)$. Comparison between Monte Carlo simulation and analytic expression in \eqref{eq:theorem2}: (a) full spectrum; (b) spectrum of the error due to DAC non-uniformity. Monte Carlo results were obtained by averaging the squared DFT magnitude over $10^6$ realizations.  
Here, we defined the squared DFT magnitude $A^2$ in deciBel as $10 \log_{10}(A^2) = 20 \log_{10}(A)$.
Analytic results were obtained by multiplying the power spectral density \eqref{eq:theorem2} by $N$.
\label{fig:MonteCarlo}
}
\end{figure}

\begin{figure}
\centering
{\includegraphics[width = 0.95 \columnwidth]{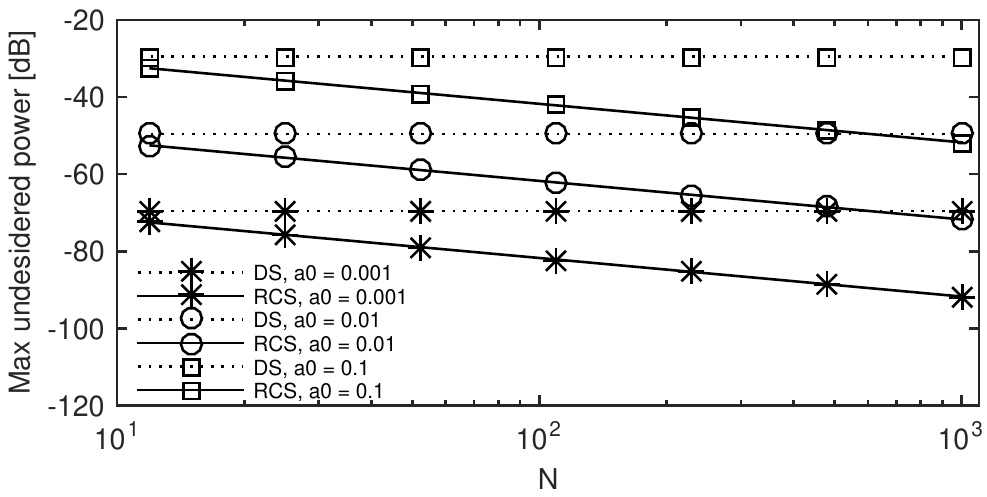}
\caption{
Theoretical behavior of maximum undesired harmonic power level vs sequence length $N$, for different DAC errors. The DAC levels are $\left( -1 ,\, a_0 ,\, 1 \right)$. Harmonic amplitude levels are normalized with respect to $N$. 
Specifically, DS results are obtained from $(E_{DAC}[k])^2/N^2$; RCS results are obtained from $R_U[k]/N$.
}
\label{fig:maxharm_vs_N}
}
\end{figure}

Theorem~2 shows that when DAC levels are non-uniform, the RCS is characterized 
by an error spectrum containing a DC and a wide-band spectral content, having components also at even
harmonics.  
Numerical results in Fig.~\ref{fig:MonteCarlo}(a) and \ref{fig:MonteCarlo}(b) provide a comparison between \eqref{eq:theorem2} and a Monte Carlo simulation considering a non-uniform DAC. A good match between theoretical derivations and numerical results is demonstrated. 

Moreover, we can observe from \eqref{eq:theorem2} that the power of the largest undesired harmonic is $\frac{2}{3}(a_0-\beta)^2$. On the other hand, based on the derivations in Section~\ref{sec:nonuniform}, the power of the highest undesired harmonic generated by the DS can be derived as $\frac{N}{9}(a_0-\beta)^2$.
Therefore, the randomization provided by the RCS results in a reduction of the power of the largest undesired harmonic by a factor of $\frac{N}{6}$. 

The theoretical behavior of the maximum undesired harmonic level for the DS and RCS sequences is shown in Fig.~\ref{fig:maxharm_vs_N} as a function of $N$ for several different DAC error levels. It is possible to notice that RCS always provides better performance than DS for the same DAC error, since $N>6$ for all practical applications.

\section{Experimental results}

The theoretical results described in Sections \ref{s:intro_ternary} and \ref{s:proposed_method} have been validated experimentally. 
Ternary sequences were synthesized offline using the DS and RCS approaches. These sequences were stored in the memory of a signal generator device, which was then used to generate voltage waveforms based on the stored sequences. 
The following subsections describe two versions of the employed experimental setup and related results. 

\subsection{Experimental Setup 1 - DDS AWG system}

The first set of experiments was conducted using the setup shown in Fig. \ref{fig:block_diagram}. 
A 14-bit Arbitrary Waveform Generator (AWG), 33220A by Agilent, operating according to the DDS principle, was used to generate a designed DS of length 42, according to \cite{Tan}, which was loaded in the arbitrary waveform memory. 
The repetition period of the sequence was set to 10 ms. 
The signal generated by the AWG was directly connected via a coaxial cable to an acquisition board (12-bit PCI 5105 by National Instruments) and synchronously sampled at a rate of 10 MSa/s. A total of $4\cdot 10^{6}$ samples were acquired, corresponding to 40 periods of the generated sequence.

\begin{figure}
\centering
\includegraphics[width = 0.95 \columnwidth]{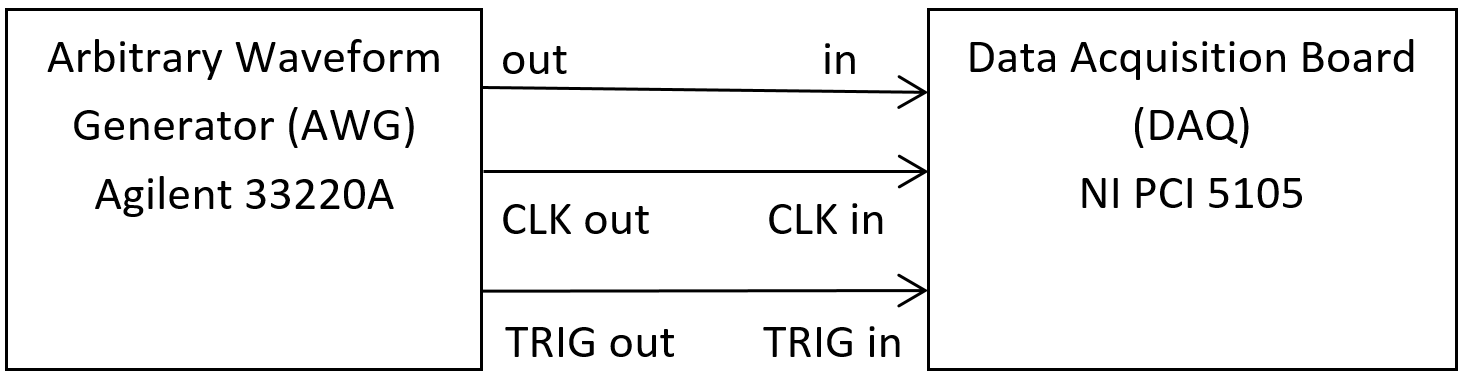}
\vskip 5pt
\caption{Diagram of the experimental setup used for the characterization of the AWG generating the ternary sequence. \label{fig:block_diagram}}
\end{figure} 

Results using such a setup are shown in Fig. \ref{fig:AWG}. In this case the ratio between waveform memory depth (65536) and sequence length (42) is not an integer. Results show that, when loading only one repetition of the sequence in memory, the automatic stretching operated by the instrument causes non-uniform duration of the sequence elements (\emph{chip duration}). Then, this non-uniform chip duration results in undesired harmonic content. 
This can be observed in Fig. \ref{fig:AWG}(a), where undesired components at 300, 900, 1500, and 2100 Hz are higher than the other undesired harmonics by about 30 dB. 
Conversely, using a repeated sequence to fill the waveform memory, this effect is mitigated and an SFDR of 80~dB is obtained, as shown in Fig. \ref{fig:AWG}(b) where the effect of the DC component is neglected. 

\begin{figure}
\centering
\subfigure
{\includegraphics[width = 0.95 \columnwidth]{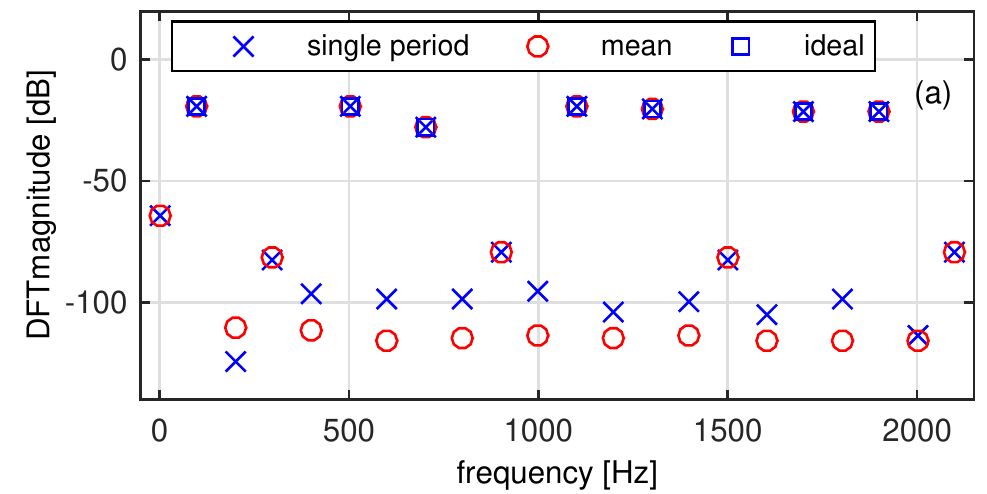}}
\subfigure
{\includegraphics[width = 0.95 \columnwidth]{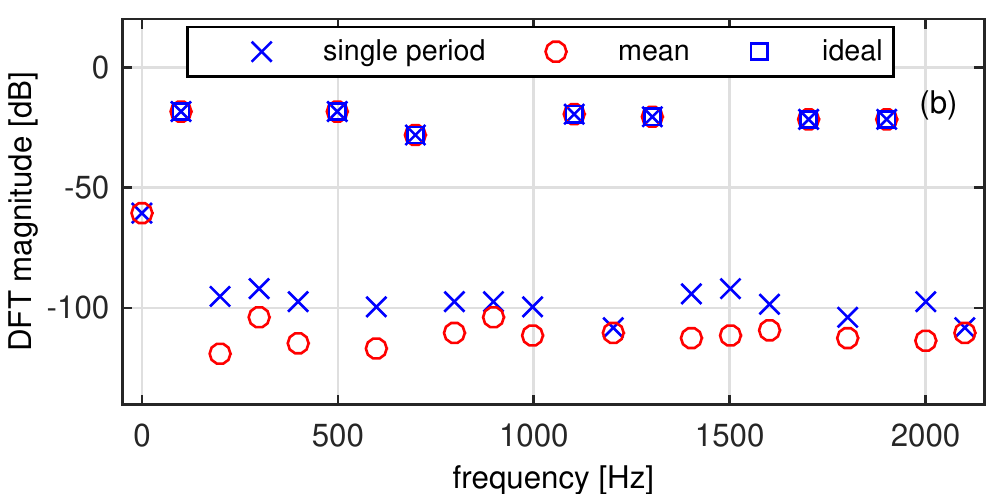}}
\caption{Estimated spectra obtained with the DS using the setup in Fig. \ref{fig:block_diagram}: (a) single repetition of the sequence in the AWG memory; (b) multiple repetition to fill the AWG memory (sequence repeated a fractional number of times). The mean spectrum is obtained by coherent averaging over 40 periods. \label{fig:AWG}}
\end{figure} 

This experimental setup can potentially operate at frequencies up to 20 MHz. However, it is prone to some of the implementation issues described in Section \ref{s:implementation_issues}, mainly non-uniform duration of elements within the sequence and limited ADC resolution. Therefore, a different experimental setup that addresses such issues, albeit operating only up to audio frequencies, was employed for further validation. This setup is described in the following subsection.

\subsection{Experimental Setup 2 - PC sound card}

\subsubsection{Setup}
Periodic sequences of length 42 were generated by means of the DAC in the sound card (ALC887 by Realtek) of a personal computer. The acquisition was performed using the ADC on the same sound card. The DAC and ADC were set at 42 kSa/s sample rate, with a resolution of 16 bits.
The sequence period was 10 ms, its repetition rate was set to $10^2$ Hz, resulting in a chip rate of 4.2 kHz. 
Coherent sampling was performed, with the DAC and ADC using the same internal clock source. 
The duration of the acquisition was $10^2$ s, corresponding to $10^4$ periods. Initial DAC and ADC transients of duration 0.1 s were discarded.

\subsubsection{Direct connection results}
Results are presented in Fig. \ref{fig:PC_42}(a) and \ref{fig:PC_42}(b), for the DS generated according to \cite{Tan} and for the RCS proposed in Section \ref{s:proposed_method}, respectively. 
The DAC output and the ADC input of the soundcard were connected together using a cable. 
The discrepancy between the ideal levels of the desired harmonics and the actual measured levels is due to a scale factor related to the amplitude range of the ADC.
Such scale factor can be compensated by calibration.
From Fig. \ref{fig:PC_42}, it is possible to observe that, by averaging over $10^4$ periods, an SFDR of approximately 100 dB
is obtained using both methods. Furthermore, calculations show that both methods achieve a THD of -104 dB.
We also stress that it is not possible to discriminate whether the undesired harmonics are caused by the DAC or by the ADC.
The DC component is excluded from the calculation of the SFDR because it can be relatively easily attenuated by means of a series capacitor or 
by averaging and subtraction operations.
Moreover, it can be observed that the difference between the highest desired harmonic and the lowest desired harmonic is approximately 5 dB for the RCS and 9 dB for the DS. 

\begin{figure}
\centering
\subfigure
{\includegraphics[width=0.95\columnwidth]{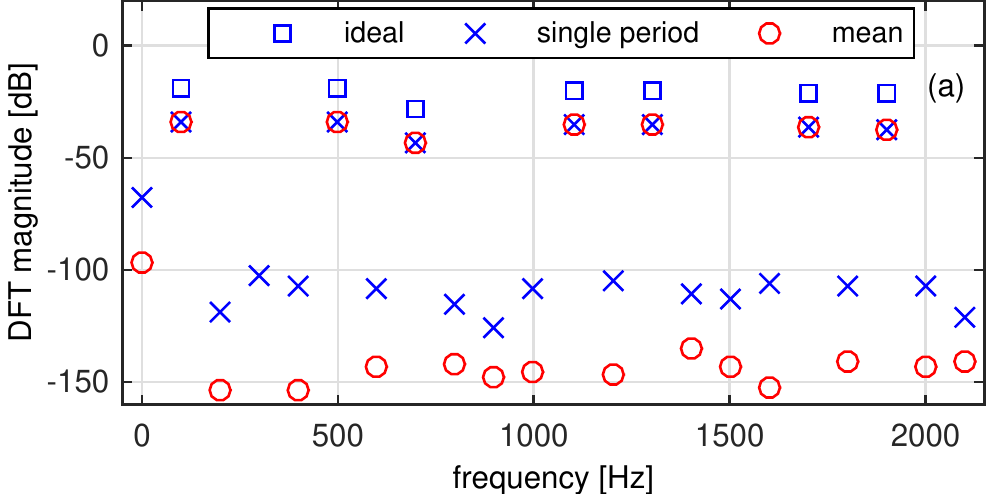}}
\subfigure
{\includegraphics[width=0.95\columnwidth]{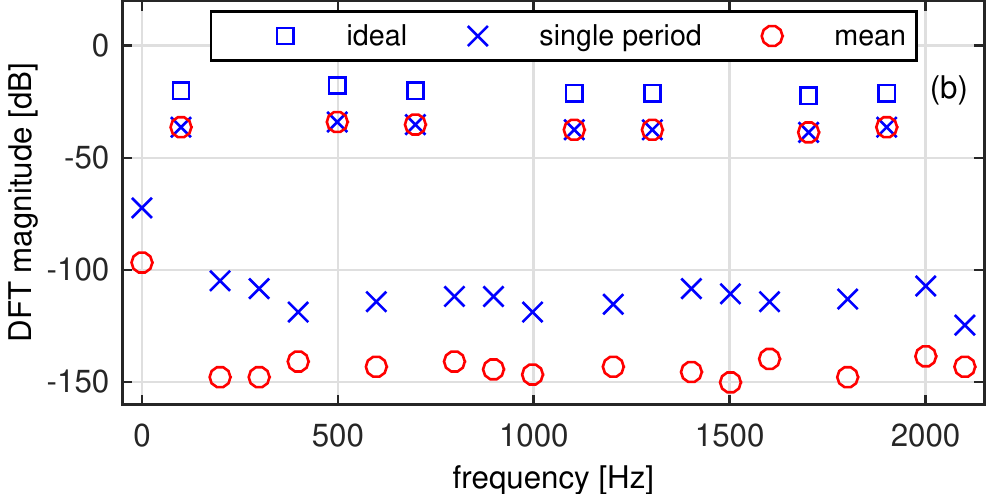}}
\caption{Estimated spectra related to the PC sound card acquisition of: (a) the DS in \cite{Tan}; (b) the proposed RCS. The mean spectrum is obtained by coherent averaging over $10^4$ periods.}
\label{fig:PC_42}
\end{figure}

\subsubsection{DAC non-uniformity emulation}
A significant DAC non-uniformity was emulated by substituting a value of $10^{-3}$ to the zero values, both in the DS and in the RCS. 
Acquisition results for the DS are shown in Fig.  \ref{fig:PC_42_nonunif}(a).  
The component at 1400 Hz, corresponding to the harmonic number $N/3=14$ is considerably higher,
resulting in an SFDR of 53 dB and a THD of -59 dB.
This validates the theoretical derivations in Section \ref{sec:nonuniform}, where it is shown that non-uniformity in the DAC levels causes the power of the harmonic at $N/3$ to increase when using the DS. 
This phenomenon has a negative effect on performance, because it contradicts the design objective stated in Section \ref{s:intro_ternary}. 
In fact, since the even-order harmonics are not entirely suppressed, it is not possible to eliminate the effect of even-order nonlinearities on the frequency response measurement.
However, the DS may be conveniently used to highlight the presence of DAC non-uniformity, just by observing one specific harmonic.
On the other hand, an acquisition using the RCS is shown in Fig. \ref{fig:PC_42_nonunif}(b). The highest undesired harmonic is at a lower level with respect to Fig. \ref{fig:PC_42_nonunif}(a), namely 58 dB below the highest desired harmonic, illustrating that the proposed method mitigates the effect of DAC non-uniformity, by about 5 dB with respect to DS performance.
This gain will grow as the sequence length $N$ increases.
Therefore, the RCS can be used in those cases where insensitivity to non-uniformity is required, while the DS may be beneficial when the detection of DAC non-uniformity is necessary.

\begin{figure}
\centering
\subfigure
{\includegraphics[width=0.95\columnwidth]{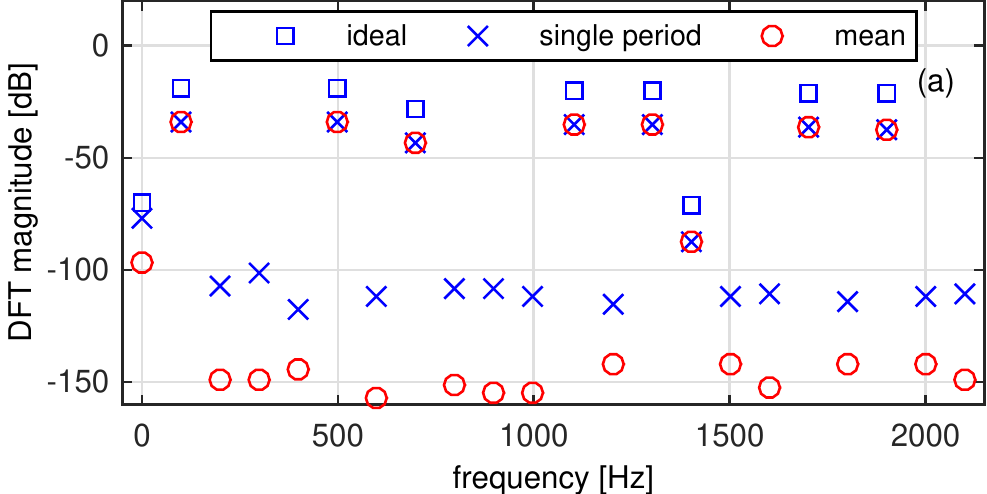}}
\subfigure
{\includegraphics[width=0.95\columnwidth]{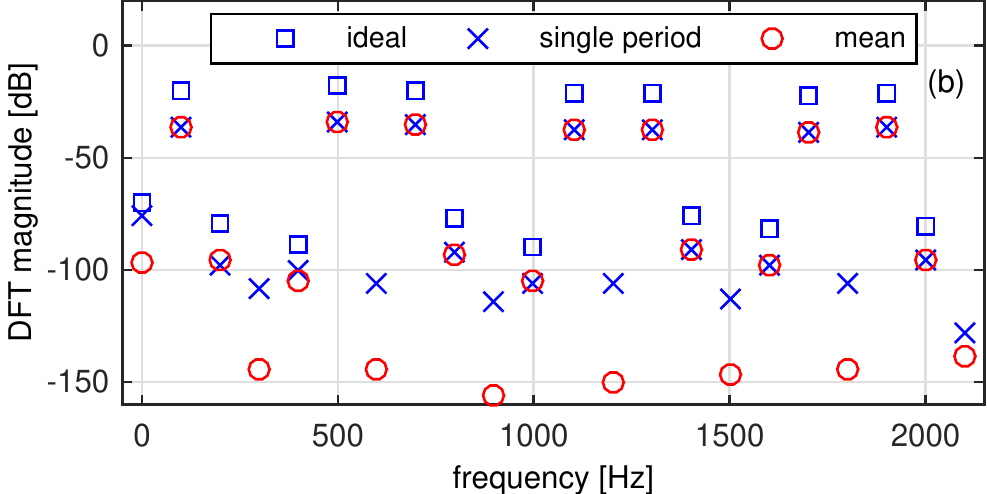}}
\caption{Experimental emulation of DAC non-uniformity. Periodic sequence of length 42 with zeros replaced by $10^{-3}$: (a) DS; (b) RCS. Notice that in (a) the 14th harmonic at frequency 1400 Hz is relatively high, approximately 55 dB higher than the other undesired harmonics.}
\label{fig:PC_42_nonunif}
\end{figure}

\subsubsection{Analog filter effect}
Tests in the presence of a hardware first-order RC low-pass filter at the input of the ADC were conducted. 
The filter is necessary in those applications where sharp edges and non-uniform signal amplitude distributions are problematic, e.g. for ADC testing. 
Without it, in fact, intermediate ADC levels are not excited. 
Results using the DS from this test are shown in Fig. \ref{fig:PC_RC2k}. The filter is realized with $R=1000$ $\Omega$ and $C=80$ nF, yielding a 3-dB cutoff frequency of approximately 2 kHz. 
The spectral properties of the sequence are similar to those of Fig. \ref{fig:PC_42}(a). 
However, the time-domain behavior in Fig. \ref{fig:PC_time_domain} illustrates that the filter smooths the sharp edges and reduces high-frequency noise. 
Similar results for RCS are not presented here for brevity. 
An analysis of the spectrum in Fig. \ref{fig:PC_RC2k} allows the detection of a nonlinear behavior of the system. Specifically, it can be noticed that the 3rd, 9th, and 15th harmonics, at 300 Hz, 900 Hz, and 1500 Hz respectively, are approximately 10 dB higher than the average level of the other undesired harmonics. The presence of this nonlinear component is probably due to an increased loading of the DAC by the filter. An impedance matching circuit could mitigate this behavior.

\begin{figure}
\centering
\includegraphics[width=0.95\columnwidth]{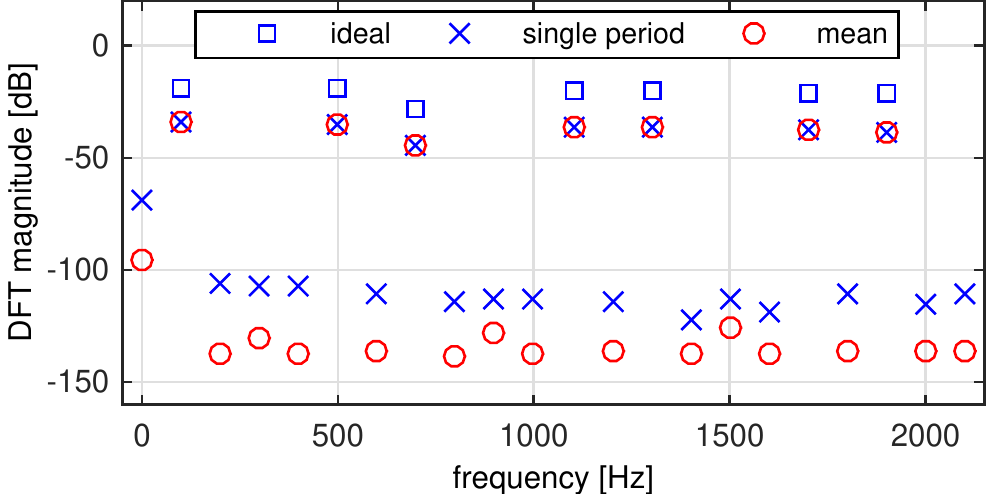}
\caption{Estimated spectra related to an acquisition in the presence of a 2-kHz RC low pass filter.}
\label{fig:PC_RC2k}
\end{figure} 

\begin{figure}
\centering
\subfigure
{\includegraphics[width=0.95\columnwidth]{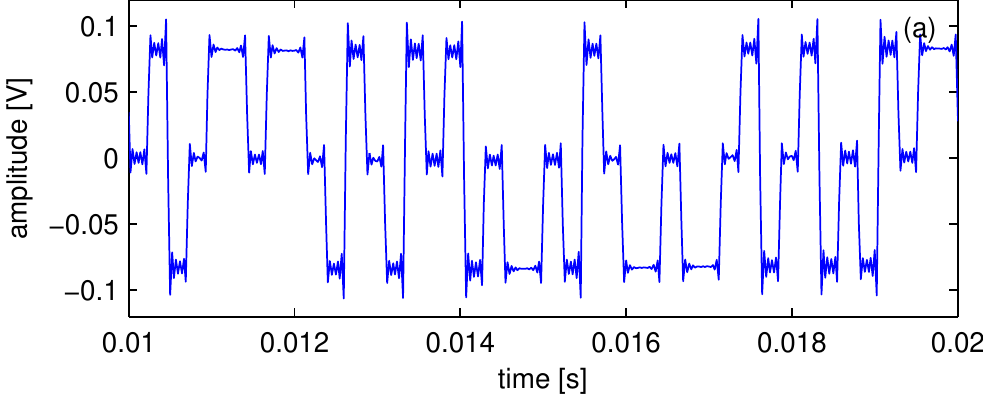}}
\subfigure
{\includegraphics[width=0.95\columnwidth]{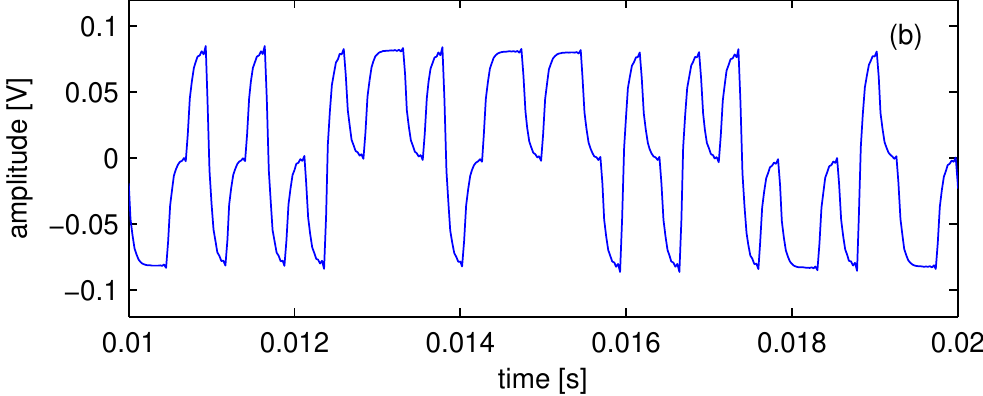}}
\caption{Time-domain behavior of one period of the acquired ternary sequence: (a) direct connection; (b) with hardware RC low pass filter. }
\label{fig:PC_time_domain}
\end{figure}

\section{Extended Experimental Comparison}
In order to provide a more extensive evaluation of the properties of the RCS and DS sequences, we compare them with two further ternary signal designs from the literature. The first is Pseudo-Random Ternary Signals, based on Galois theory \cite{BarkerEtAl2004}, which provides sequences with period $q^{n} - 1$, where $n$ is an integer and $q$ is a prime or a power of a prime. For suppressing harmonic multiples of 2 and 3, it is necessary that $q = 6k + 1$, with $k$ an integer, i.e. q = 7, 13, 19, $\ldots$. These signals have all the non-suppressed harmonics of equal amplitude, which is a desired feature. 
However, their disadvantage in a practical setting is that the number of possible sequence lengths is limited.

The second design considered for comparison is computer-optimized multilevel multi-harmonic sequences \cite{Pintelon&Schoukens2012}. These signals can have zero amplitude at harmonic multiples of 2 and 3 provided the period is an integer multiple of 6, but the amplitude at the desired harmonics is not uniform.

To perform such evaluation, we first designed a pseudorandom sequence of length 48 using the ``Galois'' software, and a pseudorandom sequence of length 42 using the ``multilev'' software\footnote{The ``Galois'' and ``multilev'' software is freely available at \url{https://www2.warwick.ac.uk/fac/sci/eng/research/systems/bbsl/signal_design}}.
Then, we generated and acquired the periodic sequences with the PC sound card experimental setup, using direct connection between the ADC and the DAC. The duration of the single sequence was set to 10 ms, whereas the total duration of the acquisition was 100 s.
Experimental results of the comparison of ``Galois'' and ``multilev'' sequences with the DS and RCS are shown in Fig.~\ref{fig:PC_extended_comparison}.

\begin{figure}[t]
\centering
\subfigure
{\includegraphics[width=0.95\columnwidth]{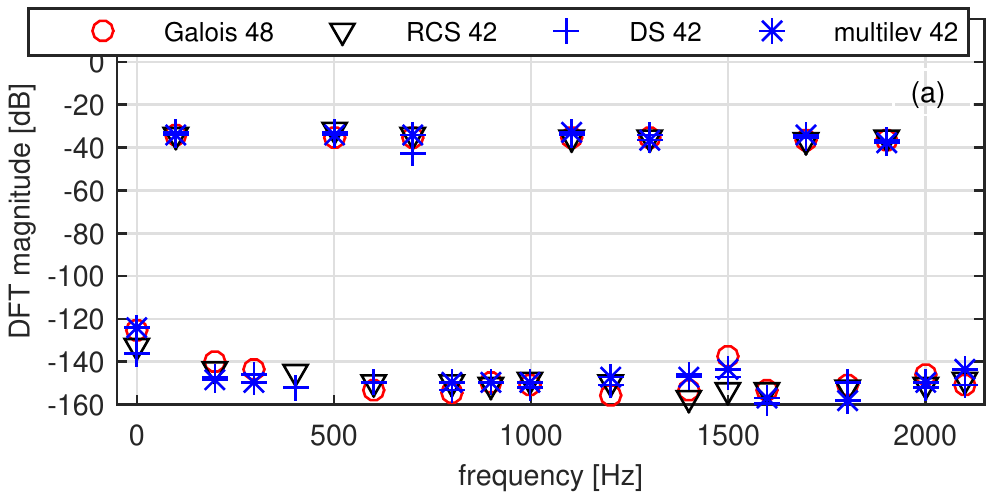}}\\
\subfigure
{\includegraphics[width=0.95\columnwidth]{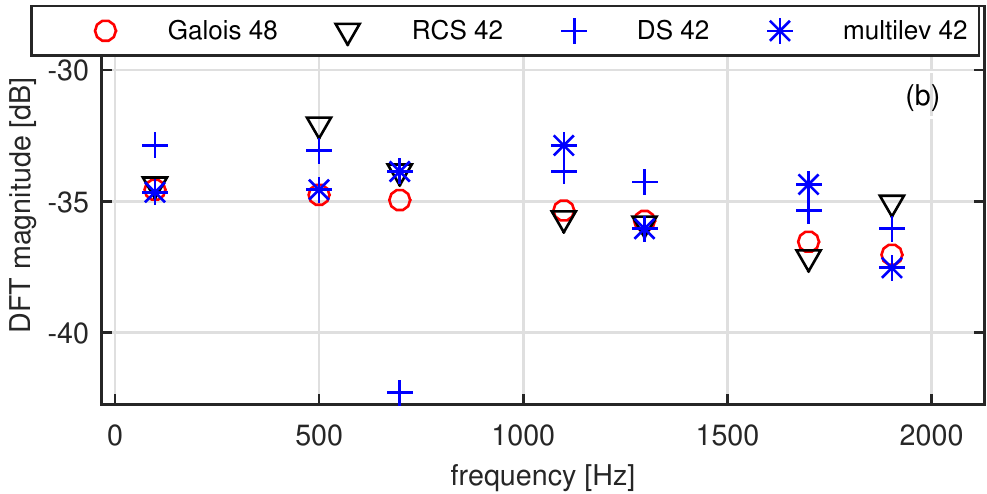}}\\
\subfigure
{\includegraphics[width=0.95\columnwidth]{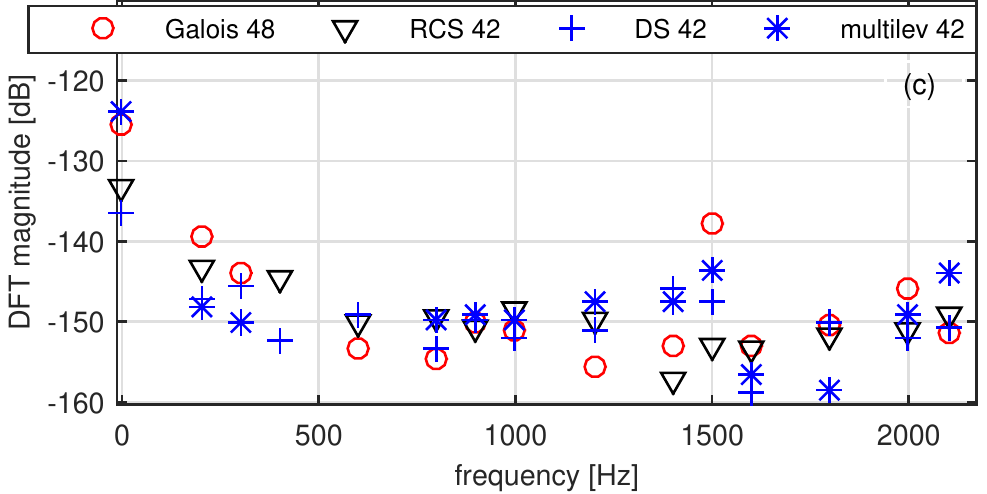}}
\caption{
Extended experimental comparison. Spectra obtained using the PC sound card setup. The proposed RCS is compared with the DS, Galois, and multilev sequences: (a) full picture, (b) magnification of the upper portion, (c) magnification of the lower portion.
}
\label{fig:PC_extended_comparison}
\end{figure}

From Fig.~\ref{fig:PC_extended_comparison}(a), it is possible to notice that all four considered sequences provide an SFDR of approximately 100 dB or larger.
Furthermore, from Fig.~\ref{fig:PC_extended_comparison}(b) it is apparent that the Galois pseudo-random ternary sequence provides the “flattest” response at the desired harmonics. This was expected, since by design such a sequence has all the non-suppressed harmonics of equal amplitude. Conversely, the DS sequence has one of the desired harmonics of amplitude much smaller than the others.
The RCS provides a smaller spread of the excited harmonics with respect to the DS, but a larger spread with respect to the Galois sequence. On the other hand, the Galois sequence has a larger level of the undesired harmonic number 15, see Fig.~\ref{fig:PC_extended_comparison}(c).
Finally, it can be noticed that the performance of RCS and multilev is similar.
Observe that we selected a Galois sequence with a length of 48 while all others have length 42. This is due to the fact that 48 is one of the limited possible lengths that Galois sequences can have. However, the comparison is fair, since the acquisition time is the same (100 s) for all sequences. For the Galois sequence, a sampling rate of 48 kHz was used. 

Results in the presence of a non-uniform DAC are presented in Fig.~\ref{fig:PC_extended_comparison_nonuniform_DAC}.
Such graphs are obtained as in Fig.~\ref{fig:PC_42_nonunif} by artificially replacing the zero values in the sequences with $10^{-3}$.
As shown in Fig.~\ref{fig:PC_extended_comparison_nonuniform_DAC}(b), the DS sequence is considerably affected by the non-uniformity of DAC levels, while the Galois pseudorandom sequence is the least affected. 
The RCS and multilev sequences provide comparable performance, both being robust with respect to DAC non-uniformity.

\begin{figure}[t]
\centering
\subfigure
{\includegraphics[width=0.95\columnwidth]{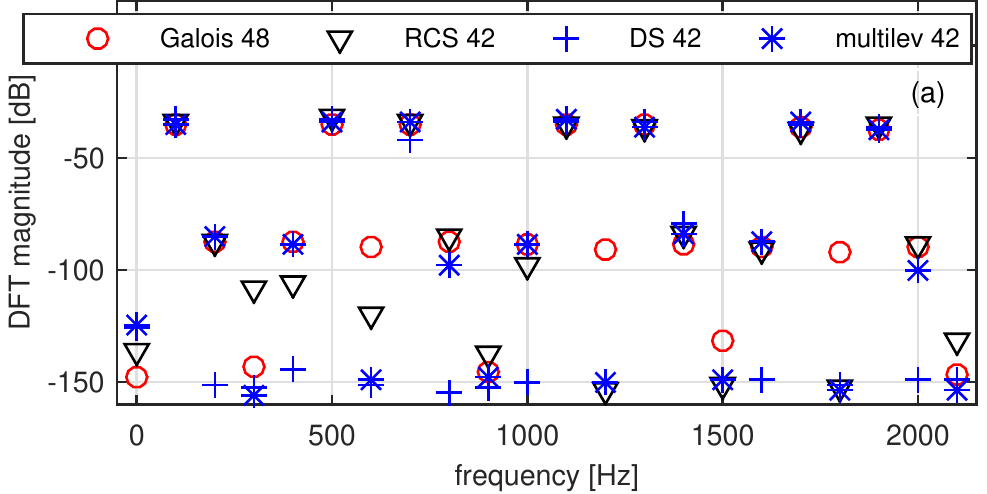}}\\
\subfigure
{\includegraphics[width=0.95\columnwidth]{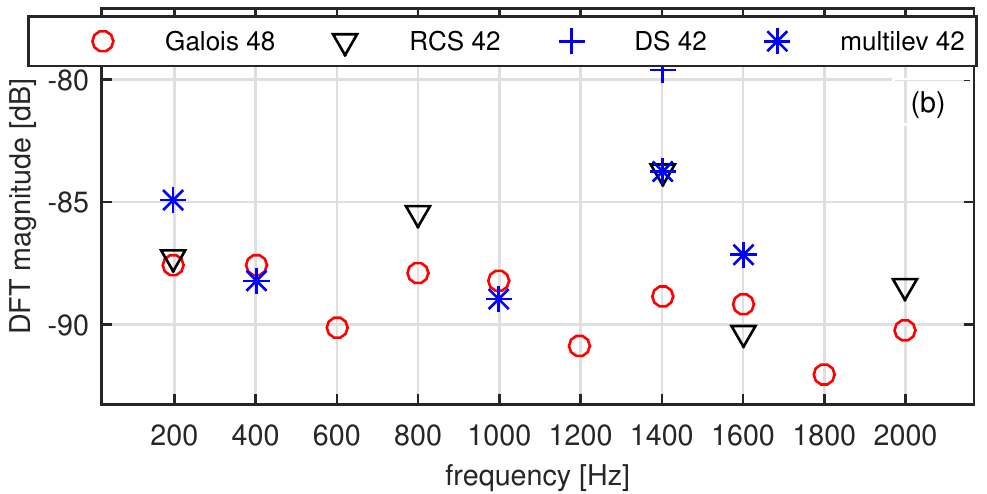}}
\caption{
Extended experimental comparison, with artificially imposed DAC non-uniformity  (levels used: [-1 0.001 1]): (a) full picture, (b) magnification of the lower portion.
}
\label{fig:PC_extended_comparison_nonuniform_DAC}
\end{figure}

\section{Conclusion}
The design of excitation signals for testing dynamic systems was analyzed from a measurement point of view.
The effect of DAC non-uniform levels on the spectral properties of synthesized ternary sequences was analyzed, together with other practical implementation issues, both theoretically and experimentally. 
Furthermore,  a low cost method was developed to generate signals with a very high spectral purity. 
Specifically, a new constrained randomized approach for synthesizing ternary sequences with harmonic multiples of two and three suppressed was proposed, characterized theoretically, and compared experimentally with three approaches from the literature, demonstrating harmonic suppression of the order of 100 dB. 

\appendices
\section{Proof of Theorem 1}
First, let us derive the autocorrelation 
$
r_U[m,l] = E \left( u_m u_l \right) = r_U[n] = E \left( u_m u_{m+n} \right),
$
where we have defined $n=l-m$ and the notation $u_m$ is a shorthand for $u[m]$.
For $n=0$, $r_U[n]=E\left( u_m^2 \right) = \frac{2}{3}$.
Furthermore, for $n=\frac{N}{2}$, we have that $u_{m+N/2}=-u_m$ by construction, thus 
$r_U[n]=-E\left( u_m^2 \right) = -\frac{2}{3}$. 
Moreover, for $n=\frac{N}{6},\,\frac{5}{6}N$, $r_U[n]=\frac{1}{3}$. 
Finally, for $n=\frac{N}{3},\,\frac{2}{3}N$, $r_U[n]=-\frac{1}{3}$. 
In the remaining cases, the random variables $u_m$ and $u_{m+n}$ are independent and zero-mean, thus $r_U[n]=0$.
Therefore, the autocorrelation can be written as
\begin{align}
\label{eq:autocorrelation_ideal}
r_U[n] = \left\{ 
\begin{array}{ll}
\frac{2}{3} & n=0 \\
-\frac{2}{3} & n=\frac{N}{2} \\
\frac{1}{3} & n=\frac{N}{6} \,, \frac{5}{6} N\\
-\frac{1}{3} & n=\frac{N}{3} \,, \frac{2}{3} N\\
0 & \mbox{elsewhere}
\end{array}
\right.  \,.
\end{align}


Then, by taking the discrete Fourier transform (DFT) of \eqref{eq:autocorrelation_ideal}, we obtain
\begin{align}
R_U[k] & = 
\frac{2}{3} \left( 1 - \mathrm{e}^{-\mathrm{i} \pi k} \right) \notag \\
& + \frac{1}{3} \left(
 \mathrm{e}^{-\mathrm{j} \frac{\pi}{3}k } 
 + \mathrm{e}^{-\mathrm{j}\frac{5}{3} \pi\, k }
 - \mathrm{e}^{-\mathrm{j}\frac{2}{3} \pi\, k } 
 - \mathrm{e}^{-\mathrm{j}\frac{4}{3} \pi\, k } 
 \right) \notag \\
 & = 
\frac{2}{3} \left( 1 - \mathrm{e}^{-\mathrm{i} \pi k} \right) \notag \\
& + \frac{1}{3} \left(
 \mathrm{e}^{-\mathrm{j} \frac{\pi}{3}k } 
 + \mathrm{e}^{\mathrm{j}\frac{\pi}{3}k }
 - \mathrm{e}^{-\mathrm{j}\frac{2}{3} \pi\, k } 
 - \mathrm{e}^{\mathrm{j}\frac{2}{3} \pi\, k } 
 \right) \,.
\label{eq:exponential}
\end{align}
The theorem follows from \eqref{eq:exponential}, by using Euler's formula.


\section{Proof of Theorem 2}
Denote as $y[\cdot]$ a shorthand of $y_{DAC}[\cdot]$ defined in Section \ref{sec:nonuniform}. We have:
\begin{align}
\begin{split}
	y_my_l = & \left(\alpha u_m+\beta+(a_0-\beta)[u_m=0] \right) \notag \\
	 \times &
	\left(\alpha u_l+\beta+(a_0-\beta)[u_l=0] \right)
\end{split}
\end{align}
By expanding terms, defining $n=l-m$, and taking the expected value, we obtain:
\begin{align}
\begin{split}
r_Y[n] & = E(y_my_l)= \beta^2+\alpha^2 E(u_mu_l)+\alpha \beta E(u_m+u_l) \\
& + (a_0-\beta)\beta E\left( [u_m=0] + [u_l=0]\right)  \\
& + \alpha(a_0-\beta)E\left( u_m[u_l=0]+u_l[u_m=0] \right)\\
& + (a_0-\beta)^2E\left( [u_m=0][u_l=0]\right) \,.
\label{corre}
\end{split}
\end{align}
Since $u_m \in \{ -1,0,1\}$ with equal probability and $[u_m=0]=1$ with probability $\frac{1}{3}$ and $0$ otherwise,
\begin{align}
\begin{split}
	E(u_m) = 0  \quad E([u_m=0]) = \frac{1}{3} \quad m=0, \ldots, N-1 \,.
\end{split}
\end{align}
Now consider the expected value $E\left( u_m[u_l=0]\right)=E\left( u_m[u_{m+n}=0]\right)$ and define
$\mathcal{M} = \{ 0, \frac{1}{6}N,\frac{N}{3}, \frac{1}{2}N, \frac{2}{3}N, \frac{5}{6}N \}$.
If $n \not\in \mathcal{M}$, $u_m$ and $u_{m+n}$ are statistically independent and
\begin{align}
\begin{split}
	E\left( u_m[u_{m+n}=0] \right)= E(u_m)E([u_{m+n}=0]) = 0. 
\end{split}
\end{align}
When $n \in \mathcal{M}$, $u_m$ and $u_{m+n}$ are constrained by \eqref{eq:juxtaposition} and
when $n =  \frac{1}{6}N, \frac{1}{3}N,  \frac{2}{3}N, \frac{5}{6}N$
\begin{align}
\begin{split}
	E\left( u_m[u_{m+n}=0] \right)= & 
	E(E\left( u_m[u_{m+n}=0] | u_{m+n}=0 \right))  \\
	 = & \frac{1}{3}\left( -1\frac{1}{2}+1\frac{1}{2}\right) = 0.
\end{split}
\end{align}
Additionally, when $n=0$
\begin{align}
\begin{split}
	E\left( u_m[u_{m}=0] \right)= 0.
\end{split}
\end{align}
Finally, when $n= \frac{1}{2}N$
\begin{align}
\begin{split}
	E\left( u_m[u_{m+n}=0] \right)= & E\left( u_m[u_{m+\frac{N}{2}}=0]\right) \\
	 = &E\left( -u_{m+\frac{N}{2}}[u_{m+\frac{N}{2}}=0]\right) = 0.
\end{split}
\end{align}
Thus, $E\left( u_m[u_{m+n}=0] \right) = 0$ for each of the possible values of $m$ and $n$.
Now consider $r_e[n]=E\left( [u_m=0][u_{m+n}=0]\right)$.
If $n \not\in \mathcal{M}$, $u_m$ and $u_{m+n}$ are statistically independent  and
\begin{align}
\begin{split}
	E\left( [u_m=0][u_{m+n}=0] \right) = & E([u_m=0])E([u_{m+n}=0])
	=  \frac{1}{9} \,.
\end{split}
\end{align}
If $n=0$, 
\begin{align}
\begin{split}
	E\left( [u_m=0][u_{m+n}=0] \right) & = E([u_m=0]^2) \\
	& =E([u_{m}=0]) = \frac{1}{3} \,.
\end{split}
\end{align}
Given that $u_m = -u_{m+\frac{N}{2}}$ if $n=\frac{N}{2}$, 
\begin{align}
\begin{split}
	E\left( [u_m=0][u_{m+\frac{N}{2}}=0] \right)= \frac{1}{3} \,.
\end{split}
\end{align}
Finally when $n = \frac{1}{6}N, \frac{1}{3}N,  \frac{2}{3}N, \frac{5}{6}N$,
$u_m$ and $u_{m+n}$ can not be zero at the same time given that they belong to the same triplet obtained by permuting elements in the vector $b_0$, so
\begin{align}
\begin{split}
	E\left( [u_m=0][u_{m+n}=0] \right)=0
\end{split}
\end{align}
results.
Thus $r_e[n] = \frac{1}{9}+ r_{E}[n]$ where
\begin{equation}
	r_E[n] = \left\{
\begin{array}{ll}
	\frac{2}{9} & n=0, \frac{N}{2} \\
	-\frac{1}{9} & n=\frac{1}{6}N, \frac{1}{3}N,  \frac{2}{3}N, \frac{5}{6}N \\
	0 & \mbox{otherwise}
\end{array}
	\right. \,.
\end{equation}
Thus, from (\ref{corre}),
\begin{align}
\begin{split}
	r_Y[n] = & \frac{2}{3}(a_0-\beta)\beta+ \beta^2+ \frac{1}{9}(a_0-\beta)^2 \\
	+ & \alpha^2 E(u_mu_{m+n}) + (a_0-\beta)^2r_E[n] \\ 
	= & \left(\frac{2}{3}\beta+ \frac{1}{3}a_0\right)^2+\alpha^2 E(u_mu_{m+n}) + 
	(a_0-\beta)^2r_E[n] \,.
\end{split}
\label{eq:r_Y}
\end{align}
Therefore, the autocorrelation sequence contains a constant term, the original autocorrelation sequence multiplied by a gain and an additional autocorrelation sequence representing the error term.

By taking the DFT of $r_E[n]$ we obtain
\begin{align}
\begin{split}
	R_E[k] = & \sum_{n=0}^{N-1}r_E[n]e^{-j2\pi k \frac{n}{N}} \\ 
	= & 
	-\frac{1}{9} \left( \sum_{n=0}^{5}e^{-j2\pi k \frac{n}{6}}-1-e^{-j\pi k} \right) 
	+\frac{2}{9} (1+e^{-j\pi k}) \\
	= & -\frac{1}{9}\sum_{n=0}^{5}e^{-j2\pi k \frac{n}{6}} 
	+ \frac{1}{3}\{1+(-1)^k\}, \, k=0, \ldots, N-1 \,.
\end{split}
\end{align}
Since
\[
	\sum_{n=0}^{5}e^{-j2\pi k \frac{n}{6}} = \left\{
	\begin{array}{ll}
		6 & k = 6 m , \mbox{with $m$ integer}\\
		0 & \mbox{otherwise}
	\end{array}
	\right.
\]
we obtain
\begin{align}
\label{eq:R_E}
\begin{split}
	R_E[k] = &
	 -\frac{2}{3}[(k\mod 6) = 0]  + \frac{1}{3}\{ 1+(-1)^k\} \\
	 =  & \frac{2}{3}(\delta[k \bmod 2]-\delta[k \bmod 6]) \,.
\end{split}
\end{align}
The theorem follows by substituting \eqref{eq:R_E} into the DFT of \eqref{eq:r_Y}.



\balance

\end{document}